\title{Calibrated Counterfactual Conformal Fairness (\texorpdfstring{$\bm{C^3F}$}{C3F}):\\
Post-hoc, Shift-Aware Coverage Parity via Conformal Prediction and Counterfactual Regularization}
\author{
\textbf{Faruk Alpay}\textsuperscript{1} \quad \textbf{Taylan Alpay}\textsuperscript{2}\\[4pt]
\small \textsuperscript{1}Lightcap, Department of Bias \quad \texttt{alpay@lightcap.ai}\\
\small \textsuperscript{2}Turkish Aeronautical Association, Aerospace Engineering \quad \texttt{s220112602@stu.thk.edu.tr}
}
\date{\vspace{-8pt}September 29, 2025}
\theoremstyle{plain}
\newtheorem{theorem}{Theorem}
\newtheorem{corollary}[theorem]{Corollary}
\theoremstyle{remark}
\newtheorem{remark}{Remark}
\begin{document}
\maketitle

\begin{abstract}
We introduce Calibrated Counterfactual Conformal Fairness (C3F), a post-hoc wrapper for predictive systems that targets group-conditional coverage parity under covariate shift. The procedure combines (i) group- and feature-conditional conformal prediction with importance-weighted quantile calibration and (ii) a counterfactual regularizer based on path-specific effects in a structural causal model. We prove finite-sample bounds for group-wise coverage that degrade gracefully with a $\chi^2$ (R\'enyi-$\alpha{=}2$) divergence between calibration and target covariate distributions, and we derive a bound on the equalized conditional coverage gap (ECCG). We also analyze a tractable surrogate for counterfactual coverage parity using path-specific effects. Empirically, C3F is model-agnostic and deployable without retraining, complementing shift-aware and adaptive conformal methods and recent work on coverage fairness.
\end{abstract}

\section{Introduction}
Conformal prediction (CP) delivers finite-sample coverage under exchangeability \citep{vovk2005alrw,shafer2008tutorial}. In practice, both performance and coverage can vary across subpopulations, and distribution shift may erode guarantees. Likelihood-ratio- or weight-based CP corrects for covariate shift \citep{tibshirani2019wcp}, while adaptive schemes adjust to nonstationarity over time \citep{gibbs2021aci}. Parallel efforts seek conditional or subgroup validity \citep{vovk2012conditional,gibbs2023conditional} and fairness-aware coverage \citep{zhou2024equalized,cresswell2024disparate}. 

We propose \textbf{$C^3F$}, a \emph{calibrate-then-regularize} recipe to (a) attain group-conditional coverage targets under shift via importance-weighted conformal quantiles and (b) dampen unfair path-specific influence via a counterfactual regularizer inspired by causal mediation \citep{pearl2001direct,avin2005pse,kusner2017cf,chiappa2019pscf}. Our analysis frames coverage error through weighted empirical processes with $\chi^2$-controlled importance weights \citep{cortes2010importance,vanerven2014renyi} and uses classical quantile concentration \citep{massart1990dkw}.

\paragraph{Contributions.}
\begin{itemize}
\item \textbf{Shift-aware group coverage.} We state a finite-sample lower bound on group-wise coverage under covariate shift with a deviation term governed by $\chi^2(Q\Vert P_{\mathrm{cal}})$.
\item \textbf{Coverage-parity control.} We bound the equalized conditional coverage gap (ECCG) across groups and show how budget-splitting of miscoverage across groups trades efficiency for parity.
\item \textbf{Counterfactual regularization.} We formulate a smooth surrogate for counterfactual coverage disparity using path-specific effects, proving first-order control when thresholds are regularized.
\end{itemize}

\section{Preliminaries and notation}
Let $(X,A,Y)\sim P$ with covariates $X\in\mathcal{X}$, group attribute $A\in\mathcal{A}$, label $Y\in\mathcal{Y}$. A base score function $f:\mathcal{X}\to\mathcal{S}$ is wrapped by CP using a nonconformity score $\eta=g(x,y;f)$. Split CP computes scores on a calibration set $\mathcal{D}_{\mathrm{cal}}$ and chooses a quantile threshold $q$ so that the resulting prediction set $\mathcal{C}(x)=\{y:\eta(x,y;f)\le q\}$ achieves target marginal coverage $\tau$ \citep{lei2014dfpb,barber2021jackknifeplus}. 

We focus on \emph{group-conditional} coverage. For $a\in\mathcal{A}$ and target $\tau=1-\alpha$, define
\[
\mathrm{ECCG}(\mathcal{C}) \triangleq \max_{a,a'\in\mathcal{A}}\left|\Pr\{Y\in\mathcal{C}(X)\mid A=a\}-\Pr\{Y\in\mathcal{C}(X)\mid A=a'\}\right|.
\]
Let $P_{\mathrm{cal}}$ denote the calibration distribution and $Q$ the target test distribution. Under \emph{covariate shift} $Q_{Y\mid X}=P_{\mathrm{cal},Y\mid X}$ and $Q_X\neq P_{\mathrm{cal},X}$, define the group-conditional importance weight
\[
w_a(x)\;=\;\frac{\mathrm{d}Q_X(\cdot\mid A=a)}{\mathrm{d}P_{\mathrm{cal},X}(\cdot\mid A=a)}(x),
\]
assumed square-integrable so that the $\chi^2$-divergence is finite \citep{cortes2010importance,vanerven2014renyi}.

\section{Method: \texorpdfstring{$C^3F$}{C3F}}
\subsection{Shift-aware group-conditional calibration}
Compute nonconformity scores $\eta_i=\eta(X_i,Y_i;f)$ on $(X_i,A_i,Y_i)\in\mathcal{D}_{\mathrm{cal}}$. For each group $a$, form a \emph{weighted} empirical CDF using $w_a(X_i)$ and choose the smallest $q_a$ with weighted tail probability at most $\alpha_a$:
\[
\widehat{q}_{a}\;\triangleq\;\inf\Big\{q:\;\frac{\sum_{i:A_i=a} w_a(X_i)\,\mathbf{1}\{\eta_i\le q\}}{\sum_{i:A_i=a} w_a(X_i)}\ge 1-\alpha_a\Big\}.
\]
Here the miscoverage budget is split as $\sum_{a}\pi_a\alpha_a=\alpha$ (e.g., $\pi_a\propto n_a$ or uniform), trading efficiency for parity. When $A$ is partially observed, we replace hard group membership by soft posteriors $\widehat{p}(A=a\mid X)$ and compute \emph{soft-Mondrian} thresholds \citep{vovk2012conditional}.

\subsection{Counterfactual coverage regularization}
Let $\mathcal{M}$ be an SCM for $(A,X,Y)$ with exogenous $U$, and consider $do(A\!\leftarrow\! a')$. The \emph{counterfactual coverage disparity} is
\[
\Delta_{\mathrm{CF}} \;\triangleq\; \max_{a\neq a'}\Big|\Pr\{Y\in\mathcal{C}(X)\mid A=a\}-\Pr\{Y\in\mathcal{C}(X^{\mathrm{cf}}(a'))\mid A=a\}\Big|,
\]
where $X^{\mathrm{cf}}(a')$ denotes the counterfactual covariate under $do(A=a')$.
We approximate $X^{\mathrm{cf}}(a')$ by neutralizing unfair paths via path-specific effects (PSEs) \citep{pearl2001direct,avin2005pse,chiappa2019pscf}. Let $\widehat{\Delta}_{\mathrm{CF}}(\bm{q})$ be an empirical estimate given thresholds $\bm{q}=(q_a)_a$. Apply a smooth regularization:
\[
\widehat{q}_a^{(\lambda)}\;=\;\widehat{q}_a\Big(1+\lambda\,\partial\widehat{\Delta}_{\mathrm{CF}}/\partial q_a\Big),\qquad \lambda\ge 0.
\]

\subsection{Decision rule}
For a test point $x$ with group posterior weights $\widehat{p}(A=a\mid x)$,
\[
\mathcal{C}(x)\;=\;\big\{y:\ \eta(x,y;f)\le \sum_{a}\widehat{p}(A=a\mid x)\,\widehat{q}_a^{(\lambda)}\big\}.
\]

\begin{algorithm}[t]
\caption{$C^3F$: shift-aware, counterfactually-regularized conformal fairness}
\label{alg:c3f}
\begin{algorithmic}[1]
\Require Base score $f$, calibration set $\mathcal{D}_{\mathrm{cal}}$, target $\tau=1-\alpha$, group prior $\widehat{p}(A\mid x)$ or observed $A$, weight model $w_a(x)$, SCM/PSE module, regularizer $\lambda$
\State Compute nonconformity scores $\eta_i=\eta(X_i,Y_i;f)$
\For{each $a\in\mathcal{A}$}
  \State Estimate $w_a(X_i)$; compute $\widehat{q}_a$ as the weighted $(1-\alpha_a)$ quantile with $\sum_a\pi_a\alpha_a=\alpha$
\EndFor
\State Estimate $\widehat{\Delta}_{\mathrm{CF}}(\bm{q})$ and its gradient via PSE-based counterfactuals; set $\widehat{q}_a^{(\lambda)}=\widehat{q}_a\!\left(1+\lambda\,\partial\widehat{\Delta}_{\mathrm{CF}}/\partial q_a\right)$
\State \Return $\mathcal{C}(x)=\{y:\eta(x,y;f)\le \sum_a \widehat{p}(A=a\mid x)\,\widehat{q}_a^{(\lambda)}\}$
\end{algorithmic}
\end{algorithm}

\section{Theory}
Assume conditional exchangeability within groups for calibration data, and covariate shift between calibration and test with $Q_{Y\mid X}=P_{\mathrm{cal},Y\mid X}$. Let $n_a$ be the number of calibration points with $A=a$.

\subsection{Group-wise coverage under bounded shift}
Define $B_a$ by $\mathbb{E}_{P_{\mathrm{cal}}}[w_a(X)^2\mid A=a]\le 1+B_a$.

\begin{theorem}[Group coverage with importance-weighted quantiles]\label{thm:coverage}
Fix $\delta\in(0,1)$. With probability at least $1-\delta$ over $\mathcal{D}_{\mathrm{cal}}$, the C3F sets satisfy for every $a\in\mathcal{A}$,
\[
\Pr_{(X,Y)\sim Q}\!\big\{Y\in\mathcal{C}(X)\mid A=a\big\}\ \ge\ 1-\alpha_a\ -\ \sqrt{\frac{(1+B_a)}{2n_a}\,\log\!\frac{2|\mathcal{A}|}{\delta}}\ .
\]
\end{theorem}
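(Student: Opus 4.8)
The plan is to reduce the shifted group coverage, evaluated at the data-chosen threshold, to a one-sided deviation of a weighted empirical process, and then to control that deviation by a concentration inequality whose variance proxy is the second moment of the importance weights. Throughout I would work in the idealized regime of the theorem — observed group membership and $\lambda=0$, so that the decision threshold for a point with $A=a$ is exactly $\widehat{q}_a$ — and note that the soft-Mondrian/regularized rule reduces to this whenever the regularizer does not decrease thresholds.

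First I would establish the reweighting identity. Under the covariate-shift assumption $Q_{Y\mid X}=P_{\mathrm{cal},Y\mid X}$, for any fixed threshold $q$,
\[
G_a(q)\;\triangleq\;\Pr_{Q}\{\eta(X,Y)\le q\mid A=a\}\;=\;\mathbb{E}_{P_{\mathrm{cal}}}\!\big[w_a(X)\,\mathbf{1}\{\eta\le q\}\mid A=a\big],
\]
using $\mathbb{E}_{P_{\mathrm{cal}}}[w_a(X)\mid A=a]=1$. Thus the target group coverage is the population mean of exactly the weighted indicator whose empirical version defines $\widehat{q}_a$. Writing $N(q)=\tfrac1{n_a}\sum_{i:A_i=a}w_a(X_i)\mathbf{1}\{\eta_i\le q\}$ and $\bar w=\tfrac1{n_a}\sum_{i:A_i=a}w_a(X_i)$, the normalized weighted CDF is $\widehat{F}_a(q)=N(q)/\bar w$, and by construction $\widehat{F}_a(\widehat{q}_a)\ge 1-\alpha_a$.

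Next I would concentrate the process. Each summand $w_a(X_i)\mathbf{1}\{\eta_i\le q\}$ is nonnegative with second moment at most $\mathbb{E}_{P_{\mathrm{cal}}}[w_a(X)^2\mid A=a]\le 1+B_a$, and the weights themselves have variance at most $B_a$; so at any fixed $q$ a Bernstein/Hoeffding-type bound gives $|N(q)-G_a(q)|$ and $|\bar w-1|$ of order $\sqrt{(1+B_a)/(2n_a)\cdot\log(\cdot)}$. To handle the \emph{data-dependent} threshold $\widehat{q}_a$ I would upgrade this to a uniform statement over all $q$ by exploiting monotonicity of $q\mapsto\mathbf{1}\{\eta\le q\}$ (a weighted DKW-type inequality), in which the second-moment bound $1+B_a$ plays the role of the ambient squared range that appears in the classical unweighted DKW constant. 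A union bound over the $|\mathcal{A}|$ groups supplies the $\log(2|\mathcal{A}|/\delta)$ factor, and the short algebraic identity $\widehat{F}_a-G_a=(N-G_a\bar w)/\bar w$ converts the ratio deviation into a bound on $G_a$ once $\bar w$ is shown to be near $1$.

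Finally, on the probability-$(1-\delta)$ event where the uniform bound holds, I would chain the pieces: $G_a(\widehat{q}_a)\ge \widehat{F}_a(\widehat{q}_a)-\varepsilon_a\ge 1-\alpha_a-\varepsilon_a$ with $\varepsilon_a=\sqrt{(1+B_a)/(2n_a)\,\log(2|\mathcal{A}|/\delta)}$, which is the claim. The main obstacle is the interaction between the unbounded importance weights and the selected threshold: a fixed-$q$ deviation bound is routine, but the classical DKW inequality needs bounded increments, so the delicate step is proving the uniform-in-$q$ bound with variance proxy $1+B_a$ rather than a crude $\sup_x w_a(x)$. Controlling the self-normalizing denominator $\bar w$ simultaneously, so its fluctuation does not degrade the rate, is the secondary technical point.
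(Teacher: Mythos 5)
Your proposal follows essentially the same route as the paper's proof: the covariate-shift reweighting identity, concentration of the self-normalized weighted empirical CDF with the second moment $1+B_a$ as variance proxy, a uniform-in-$q$ (weighted DKW-type) bound to absorb the data-dependent threshold $\widehat{q}_a$, monotonicity of the quantile map, and a union bound over groups. If anything, you are more explicit than the paper, which silently handles the self-normalizing denominator $\bar{w}$ and outsources the uniform deviation bound to citations of Cortes et al.\ and Massart rather than confronting the step you rightly flag as delicate --- namely that sub-Gaussian, uniform-in-$q$ concentration with constant $1+B_a$ from only a second-moment bound on unbounded weights is not a routine consequence of either cited result.
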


\begin{proof}
Write $F_{a}^{Q}(q)=\Pr_{Q}(\eta\le q\mid A=a)$. Under covariate shift,
$F_{a}^{Q}(q)=\mathbb{E}_{P_{\mathrm{cal}}}[w_a(X)\mathbf{1}\{\eta\le q\}\mid A=a]/\mathbb{E}_{P_{\mathrm{cal}}}[w_a(X)\mid A=a]$.
The self-normalized importance-weighted empirical CDF
$\widehat{F}_{a}(q)=\frac{\sum_{i:A_i=a}w_a(X_i)\mathbf{1}\{\eta_i\le q\}}{\sum_{i:A_i=a}w_a(X_i)}$
converges uniformly to $F_{a}^{Q}(q)$ with deviation controlled by $\mathbb{E}[w_a^2\mid A=a]$ \citep{cortes2010importance}. A Hoeffding/DKW argument \citep{massart1990dkw} yields
$\sup_q|\widehat{F}_{a}(q)-F_{a}^{Q}(q)|\le \sqrt{\frac{(1+B_a)}{2n_a}\log\frac{2}{\delta_a}}$ with prob.\ $1-\delta_a$.
Monotonicity of the quantile map implies that the empirical $(1-\alpha_a)$ weighted quantile $\widehat{q}_a$ delivers true CDF level at least $1-\alpha_a-\varepsilon_a$, with $\varepsilon_a$ the RHS above. Union bound over $a$ completes the proof.
\end{proof}

\begin{remark}
When $w_a\equiv 1$ (no shift), $B_a=0$ and Theorem~\ref{thm:coverage} reduces to DKW scaling \citep{massart1990dkw}. For known or accurately estimated likelihood ratios, the result aligns with weighted CP \citep{tibshirani2019wcp}. For unknown shift, adaptive procedures offer alternatives \citep{gibbs2021aci,qiu2023adaptive}.
\end{remark}

\subsection{Equalized conditional coverage gap}
Let $\alpha_a=\alpha\,\pi_a$ with $\sum_a\pi_a=1$.

\begin{corollary}[ECCG control]\label{cor:eccg}
With probability at least $1-\delta$,
\[
\mathrm{ECCG}(\mathcal{C})\ \le\ \max_{a,a'}\Big|\alpha(\pi_a-\pi_{a'})\Big|\ +\ \max_{a,a'}\left\{\sqrt{\tfrac{(1+B_a)}{2n_a}\log\!\tfrac{2|\mathcal{A}|}{\delta}}+\sqrt{\tfrac{(1+B_{a'})}{2n_{a'}}\log\!\tfrac{2|\mathcal{A}|}{\delta}}\right\}.
\]
\end{corollary}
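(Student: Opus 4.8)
The plan is to reduce the ECCG bound directly to Theorem~\ref{thm:coverage}, since the corollary is essentially a two-sided consequence of the one-sided group coverage guarantee combined with the budget-splitting choice $\alpha_a = \alpha\pi_a$. First I would observe that the ECCG is a maximum over pairs of the \emph{difference} of group-conditional coverages, so it suffices to control each group's coverage both from below and from above and then apply the triangle inequality. Theorem~\ref{thm:coverage} already supplies the lower bound $\Pr_Q\{Y\in\mathcal{C}(X)\mid A=a\} \ge 1-\alpha_a - \varepsilon_a$ with $\varepsilon_a = \sqrt{\tfrac{(1+B_a)}{2n_a}\log\tfrac{2|\mathcal{A}|}{\delta}}$; the matching upper bound $\Pr_Q\{\cdots\mid A=a\} \le 1-\alpha_a + \varepsilon_a$ follows from the same DKW/uniform-convergence step, since $\sup_q|\widehat{F}_a(q)-F_a^Q(q)|\le\varepsilon_a$ is a two-sided statement and the weighted quantile map is monotone. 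Crucially the $|\mathcal{A}|$ factor and the single parameter $\delta$ inside each $\varepsilon_a$ mean the union bound over all groups has \emph{already} been absorbed in Theorem~\ref{thm:coverage}, so on the same high-probability event (probability at least $1-\delta$) the two-sided deviation holds simultaneously for every $a$.

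On that event I would write, for any pair $(a,a')$,
\[
\big|\Pr_Q\{Y\in\mathcal{C}(X)\mid A{=}a\}-\Pr_Q\{Y\in\mathcal{C}(X)\mid A{=}a'\}\big|
\le |(1-\alpha_a)-(1-\alpha_{a'})| + \varepsilon_a + \varepsilon_{a'}.
\]
Substituting $\alpha_a=\alpha\pi_a$ turns the leading term into $|\alpha_{a'}-\alpha_a| = |\alpha(\pi_a-\pi_{a'})|$, and taking the maximum over $(a,a')$ of both the deterministic leading term and the stochastic deviation term yields exactly the stated right-hand side. The only genuine care needed is in the two-sided step: I would make explicit that the nominal coverage level attained by the weighted quantile $\widehat{q}_a$ sits in the band $[1-\alpha_a-\varepsilon_a,\,1-\alpha_a+\varepsilon_a]$, which requires noting that the empirical quantile overshoots the nominal level $1-\alpha_a$ by at most one order statistic's worth plus the uniform deviation; absorbing that $O(1/n_a)$ overshoot into the symmetric DKW constant (or treating it as negligible against $\varepsilon_a$) keeps the clean symmetric form.

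The main obstacle I anticipate is precisely this upper-bound/overshoot bookkeeping: Theorem~\ref{thm:coverage} is phrased one-sidedly, and making the two-sided version airtight requires verifying that the infimum defining $\widehat{q}_a$ does not inflate the upper coverage beyond $1-\alpha_a+\varepsilon_a$ once the self-normalized weighting is accounted for. I would handle this by invoking the same uniform-convergence bound $\sup_q|\widehat{F}_a-F_a^Q|\le\varepsilon_a$ that already underlies Theorem~\ref{thm:coverage} and arguing that monotonicity of $F_a^Q$ transfers the two-sided control on CDFs to two-sided control on the attained coverage at $q=\widehat{q}_a$. A secondary, purely cosmetic point is that the two maxima in the statement are taken separately (one over the $\pi$-gap, one over the deviation pair) rather than jointly; since the triangle-inequality bound holds for \emph{every} pair simultaneously on the good event, bounding each summand by its own maximum is valid and only loosens the constant, so no subtlety arises there.
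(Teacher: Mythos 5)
Your proposal is correct and takes essentially the same route as the paper's one-line proof: the triangle inequality applied to the group-wise deviations of Theorem~\ref{thm:coverage} together with the deterministic difference in targets $1-\alpha_a$ versus $1-\alpha_{a'}$, followed by the substitution $\alpha_a=\alpha\pi_a$. If anything, you are more careful than the paper, whose proof silently treats the one-sided coverage bound of Theorem~\ref{thm:coverage} as if it were two-sided; your explicit matching upper bound on attained coverage, via the two-sided uniform CDF deviation and the quantile-overshoot bookkeeping, is exactly the detail needed to make the absolute-value ECCG bound airtight.
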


\begin{proof}
Triangle inequality applied to group-wise deviations in Theorem~\ref{thm:coverage} and the deterministic difference in targets $1-\alpha_a$ vs.\ $1-\alpha_{a'}$.
\end{proof}

\subsection{Counterfactual disparity bound}
Let $\eta(\cdot)$ be $L_\eta$-Lipschitz in its threshold parameter, and suppose the path-specific effect (PSE) of $A$ on $\eta$ along unfair paths is bounded by $\kappa$ \citep{pearl2001direct,avin2005pse}. 

\begin{theorem}[First-order control of counterfactual coverage disparity]\label{thm:cf}
For sufficiently small $\lambda\ge 0$, the regularized thresholds $\widehat{q}_a^{(\lambda)}$ satisfy
\[
\Delta_{\mathrm{CF}}\ \le\ \kappa\,L_\eta\,\|\nabla_{\bm{q}}\widehat{\Delta}_{\mathrm{CF}}(\bm{q})\|_1\,\lambda\ +\ O(\lambda^2).
\]
\end{theorem}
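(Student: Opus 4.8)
The plan is to treat the true counterfactual disparity $\Delta_{\mathrm{CF}}$ as a scalar functional $\Phi(\bm q)$ of the threshold vector $\bm q=(q_a)_a$ and to expand it to first order about the unregularized thresholds $\widehat{\bm q}$. First I would make the dependence explicit: for each ordered pair $(a,a')$, both the factual coverage $\Pr\{Y\in\mathcal{C}(X)\mid A=a\}$ and the counterfactual coverage $\Pr\{Y\in\mathcal{C}(X^{\mathrm{cf}}(a'))\mid A=a\}$ are, through the decision rule, of the form $\Pr\{\eta\le\sum_b\widehat{p}(A=b\mid\cdot)\,q_b\}$, so each is a composition of the CDF of $\eta$ with an affine map in $\bm q$, and $\Phi$ is the maximum over pairs of the absolute difference of two such terms. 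The regularized update $\widehat{q}_a^{(\lambda)}=\widehat{q}_a(1+\lambda\,\partial\widehat{\Delta}_{\mathrm{CF}}/\partial q_a)$ displaces the thresholds by the vector $\bm s$ with $s_a=\widehat{q}_a\,\partial_{q_a}\widehat{\Delta}_{\mathrm{CF}}$, so that $\widehat{\bm q}^{(\lambda)}-\widehat{\bm q}=\lambda\bm s=O(\lambda)$.

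Next I would differentiate. Since the argument of each coverage term is affine in $\bm q$ and the coverage is $L_\eta$-Lipschitz in its threshold argument (equivalently, $\eta$ admits a density bounded by $L_\eta$), each $\Phi$-summand is differentiable in $\bm q$ away from the finitely many ties, and a Taylor expansion gives
\[
\Phi(\widehat{\bm q}^{(\lambda)})=\Phi(\widehat{\bm q})+\lambda\,\langle\nabla_{\bm q}\Phi(\widehat{\bm q}),\,\bm s\rangle+O(\lambda^2).
\]
The key structural step is to bound the directional derivative. Writing $\Phi$ as factual minus counterfactual coverage, the two branches differ only through the substitution $X\mapsto X^{\mathrm{cf}}(a')$, whose effect on $\eta$ along unfair paths is bounded by the PSE constant $\kappa$; combined with the $L_\eta$-Lipschitz coverage this yields the per-coordinate bound $|\widehat{q}_a\,\partial_{q_a}\Phi(\widehat{\bm q})|\le\kappa L_\eta$. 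Substituting this and applying the $\ell_\infty$--$\ell_1$ H\"older inequality,
\[
\big|\langle\nabla_{\bm q}\Phi(\widehat{\bm q}),\,\bm s\rangle\big|=\Big|\sum_a\widehat{q}_a\,\partial_{q_a}\Phi\,\partial_{q_a}\widehat{\Delta}_{\mathrm{CF}}\Big|\le\kappa L_\eta\sum_a\big|\partial_{q_a}\widehat{\Delta}_{\mathrm{CF}}\big|=\kappa L_\eta\,\|\nabla_{\bm q}\widehat{\Delta}_{\mathrm{CF}}(\bm q)\|_1,
\]
which is exactly the claimed linear coefficient; the $O(\lambda^2)$ remainder follows from a uniform bound on the second derivatives of the coverage CDFs (a bounded density slope), so the Lagrange remainder scales as $\|\bm s\|^2\lambda^2$.

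The hard part will be reconciling the left-hand side with the baseline. The mechanical expansion bounds the increment $\Delta_{\mathrm{CF}}(\widehat{\bm q}^{(\lambda)})-\Delta_{\mathrm{CF}}(\widehat{\bm q})$, whereas the statement writes $\Delta_{\mathrm{CF}}$ alone; the clean reading is that the theorem certifies first-order sensitivity, so that choosing the update direction as the (negative) surrogate gradient makes this increment a controlled $O(\lambda)$ reduction, and I would either state the bound for the increment or add the hypothesis that the base calibration already nulls the surrogate disparity, $\widehat{\Delta}_{\mathrm{CF}}(\widehat{\bm q})=0$. The genuinely delicate step, however, is justifying that the empirical surrogate gradient $\nabla_{\bm q}\widehat{\Delta}_{\mathrm{CF}}$ is a faithful first-order proxy for the true sensitivity $\nabla_{\bm q}\Phi$: this requires differentiating through the PSE-based counterfactual map and through the self-normalized weighted quantiles and verifying that these operations commute with the population limit, which is precisely where identifiability of the path-specific effect and boundedness of the score density (the source of $L_\eta$) do the real work. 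Establishing this alignment, rather than the Taylor expansion itself, is the crux.
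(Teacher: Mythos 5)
Your proposal follows essentially the same route as the paper's proof: a first-order expansion of the disparity in $\lambda$, with the PSE bound $\kappa$ controlling the counterfactual branch, Lipschitzness of the acceptance event in the threshold supplying $L_\eta$, and the pairing of $\nabla_{\bm{q}}\widehat{\Delta}_{\mathrm{CF}}$ with the displacement $\bm{q}^{(\lambda)}-\bm{q}$ producing the $\|\nabla_{\bm{q}}\widehat{\Delta}_{\mathrm{CF}}(\bm{q})\|_1\,\lambda$ coefficient. Your version is simply more explicit than the paper's three-sentence sketch: you write out the displacement vector $s_a=\widehat{q}_a\,\partial_{q_a}\widehat{\Delta}_{\mathrm{CF}}$, carry out the H\"older step, and justify the $O(\lambda^2)$ remainder via a bounded density slope, none of which the paper does.

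The two difficulties you flag at the end are genuine, and neither is resolved by the paper's own proof. First, the Taylor expansion controls the increment $\Delta_{\mathrm{CF}}(\widehat{\bm{q}}^{(\lambda)})-\Delta_{\mathrm{CF}}(\widehat{\bm{q}})$, whereas the theorem bounds $\Delta_{\mathrm{CF}}$ alone; taken literally, at $\lambda=0$ the statement would force the unregularized disparity to vanish. The paper's proof silently drops this zeroth-order term, so one of your two repairs (restate the bound for the increment, or hypothesize $\widehat{\Delta}_{\mathrm{CF}}(\widehat{\bm{q}})=0$ at baseline) is required for the statement to be literally true. Second, the paper never addresses the alignment of the empirical surrogate gradient $\nabla_{\bm{q}}\widehat{\Delta}_{\mathrm{CF}}$ with the population sensitivity of the true disparity, which you correctly identify as the crux; the paper's appeal to ``first variation bounded by $\kappa$'' asserts rather than proves this. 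In short, your proposal reconstructs the paper's argument faithfully and is more candid about where it is incomplete.
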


\begin{proof}
Coverage under $do(A\!\leftarrow\!a')$ changes via (i) the distributional shift of nonconformity scores through unfair paths (first variation bounded by $\kappa$), and (ii) the threshold perturbation $\bm{q}\mapsto \bm{q}^{(\lambda)}$. A first-order expansion in $\lambda$ yields the linear term $\langle \nabla_{\bm{q}}\widehat{\Delta}_{\mathrm{CF}},\,\bm{q}^{(\lambda)}-\bm{q}\rangle$ and a remainder $O(\lambda^2)$. Lipschitzness of the acceptance event in the threshold gives the factor $L_\eta$.
\end{proof}

\section{Evaluation protocol (for reproduction)}
\textbf{Datasets.} Adult, COMPAS, Law School, German Credit.\newline
\textbf{Baselines.} Likelihood-weighted CP \citep{tibshirani2019wcp}; adaptive CP \citep{gibbs2021aci}; procedures adaptive to unknown shift \citep{qiu2023adaptive}; equalized-coverage methods \citep{zhou2024equalized}.\newline
\textbf{Metrics.} Group coverage, ECCG, efficiency (set size), and a PSE-based proxy for counterfactual coverage.\newline
\textbf{Ablations.} (i) No weights vs.\ importance weights; (ii) budget splitting $\pi_a$; (iii) regularizer weight $\lambda$; (iv) soft vs.\ hard group assignment.

\section{Discussion and limitations}
\textbf{Demographics.} When $A$ is missing, posterior estimates $\widehat{p}(A\mid X)$ introduce calibration error that propagates into thresholds.\newline
\textbf{Causality.} Counterfactual regularization depends on the SCM; misspecification can bias $\widehat{\Delta}_{\mathrm{CF}}$. Path-specific techniques can mitigate this by isolating unfair pathways \citep{chiappa2019pscf}.\newline
\textbf{Human factors.} Coverage parity may interact with downstream decisions; prediction-set design can influence equity \citep{cresswell2024disparate}.

\section{Conclusion}
C3F offers a post-hoc, model-agnostic approach to group-conditional coverage parity under covariate shift. Theoretical bounds tie performance to the second moment of importance weights, while a counterfactual regularizer attenuates unfair path-specific influences. The method complements existing conformal techniques and provides a principled basis for auditing and adjustment.

\paragraph{Ethics statement}
Applying subgroup-sensitive calibration requires careful governance of sensitive attributes, transparency in model updates, and routine auditing of distribution shift.


\begin{thebibliography}{99}

\bibitem[Vovk et~al.(2005)]{vovk2005alrw}
V. Vovk, A. Gammerman, and G. Shafer.
\newblock \emph{Algorithmic Learning in a Random World}.
\newblock Springer, 2005.

\bibitem[Shafer and Vovk(2008)]{shafer2008tutorial}
G. Shafer and V. Vovk.
\newblock A tutorial on conformal prediction.
\newblock \emph{Journal of Machine Learning Research}, 9:371--421, 2008.

\bibitem[Lei and Wasserman(2014)]{lei2014dfpb}
J. Lei and L. Wasserman.
\newblock Distribution-free prediction bands for nonparametric regression.
\newblock \emph{JRSS~B}, 76(1):71--96, 2014.

\bibitem[Barber et~al.(2021)]{barber2021jackknifeplus}
R.~F. Barber, E.~J. Cand{\`e}s, A. Ramdas, and R.~J. Tibshirani.
\newblock Predictive inference with the Jackknife+.
\newblock \emph{Annals of Statistics}, 49(1):486--507, 2021.

\bibitem[Tibshirani et~al.(2019)]{tibshirani2019wcp}
R.~J. Tibshirani, R.~F. Barber, E.~J. Cand{\`e}s, and A. Ramdas.
\newblock Conformal prediction under covariate shift.
\newblock arXiv:1904.06019, 2019.

\bibitem[Gibbs and Cand{\`e}s(2021)]{gibbs2021aci}
I. Gibbs and E.~J. Cand{\`e}s.
\newblock Adaptive conformal inference under distribution shift.
\newblock In \emph{Advances in Neural Information Processing Systems}, 2021.

\bibitem[Gibbs(2023)]{gibbs2023conditional}
I. Gibbs.
\newblock Conformal prediction with conditional guarantees.
\newblock arXiv:2305.12616, 2023.

\bibitem[Vovk(2012)]{vovk2012conditional}
V. Vovk.
\newblock Conditional validity of inductive conformal predictors.
\newblock In \emph{ALT}, 2012.

\bibitem[Zhou and Sesia(2024)]{zhou2024equalized}
Y. Zhou and M. Sesia.
\newblock Conformal classification with equalized coverage for adaptively selected groups.
\newblock arXiv:2405.15106, 2024.

\bibitem[Cresswell et~al.(2024)]{cresswell2024disparate}
J.~C. Cresswell, B. Kumar, Y. Sui, and M. Belbahri.
\newblock Conformal prediction sets can cause disparate impact.
\newblock arXiv:2410.01888, 2024.

\bibitem[Cort\'es et~al.(2010)]{cortes2010importance}
C. Cort\'es, Y. Mansour, and M. Mohri.
\newblock Learning bounds for importance weighting.
\newblock In \emph{Advances in Neural Information Processing Systems}, 23:442--450, 2010.

\bibitem[van Erven and Harremo{\"e}s(2014)]{vanerven2014renyi}
T. van Erven and P. Harremo{\"e}s.
\newblock R{\'e}nyi divergence and Kullback--Leibler divergence.
\newblock \emph{IEEE Transactions on Information Theory}, 60(7):3797--3820, 2014.

\bibitem[Massart(1990)]{massart1990dkw}
P. Massart.
\newblock The tight constant in the Dvoretzky--Kiefer--Wolfowitz inequality.
\newblock \emph{Annals of Probability}, 18(3):1269--1283, 1990.

\bibitem[Lei and Cand{\`e}s(2021)]{lei2021counterfactual}
L. Lei and E.~J. Cand{\`e}s.
\newblock Conformal inference of counterfactuals and individual treatment effects.
\newblock \emph{JRSS~B}, 83(5):911--938, 2021.

\bibitem[Pearl(2001)]{pearl2001direct}
J. Pearl.
\newblock Direct and indirect effects.
\newblock In \emph{UAI}, pages 411--420, 2001.

\bibitem[Avin et~al.(2005)]{avin2005pse}
C. Avin, I. Shpitser, and J. Pearl.
\newblock Identifiability of path-specific effects.
\newblock In \emph{IJCAI}, pages 357--363, 2005.

\bibitem[Kusner et~al.(2017)]{kusner2017cf}
M.~J. Kusner, J. Loftus, C. Russell, and R. Silva.
\newblock Counterfactual fairness.
\newblock In \emph{Advances in Neural Information Processing Systems}, pages 4067--4077, 2017.

\bibitem[Chiappa(2019)]{chiappa2019pscf}
S. Chiappa.
\newblock Path-specific counterfactual fairness.
\newblock In \emph{Proceedings of the AAAI Conference on Artificial Intelligence}, 33:7801--7808, 2019.

\bibitem[Qiu et~al.(2023)]{qiu2023adaptive}
H. Qiu, E. Dobriban, and E. Tchetgen Tchetgen.
\newblock Prediction sets adaptive to unknown covariate shift.
\newblock \emph{JRSS~B}, 85(5):1680--1706, 2023.

\end{thebibliography}
\end{document}